%%%%%%%%%%%%%%%%%%%%%%%%%%%%%%%%%%%%%%%%%%%%%%%%%%%%%%%%%%%%%%%%%%%%%%%%%%%%%%%%
%2345678901234567890123456789012345678901234567890123456789012345678901234567890
%        1         2         3         4         5         6         7         8

\documentclass[letterpaper, 10 pt, conference]{ieeeconf}  % Comment this line out
                                                          % if you need a4paper
%\documentclass[a4paper, 10pt, conference]{ieeeconf}      % Use this line for a4
                                                          % paper

\IEEEoverridecommandlockouts                              % This command is only
                                                          % needed if you want to
                                                          % use the \thanks command
\overrideIEEEmargins
% See the \addtolength command later in the file to balance the column lengths
% on the last page of the document

% The following packages can be found on http:\\www.ctan.org
%\usepackage{graphics} % for pdf, bitmapped graphics files
%\usepackage{epsfig} % for postscript graphics files
%\usepackage{mathptmx} % assumes new font selection scheme installed
%\usepackage{times} % assumes new font selection scheme installed
%\usepackage{amsmath} % assumes amsmath package installed
%\usepackage{amssymb}  % assumes amsmath package installed
\usepackage{psfrag}
\usepackage{graphicx}        % standard LaTeX graphics tool
                             % when including figure files
\usepackage{algorithmic}                            
\usepackage{amsmath,amsxtra,amssymb}
\usepackage{algorithm2e}
                             
\newtheorem{theorem}{Theorem}

\newtheorem {proposition} {Proposition}

\newtheorem{remark}{Remark}

\newtheorem{assumption}{Assumption}

\title{\LARGE \bf
Fictitious Play with Time-Invariant Frequency Update \\ for Network Security
}

\author{Kien C. Nguyen, Tansu Alpcan, and Tamer Ba\c{s}ar% <-this % stops a space
\thanks{This research was supported by grants from the Deutsche Telekom Laboratories (Berlin, Germany) and the Boeing Company.}% <-this % stops a space
\thanks{Tamer Ba\c{s}ar and Kien C. Nguyen are with the Department of Electrical and Computer Engineering and the Coordinated Science Laboratory, University of Illinois at Urbana-Champaign, 1308 W Main St., Urbana, IL 61801, USA
        {\tt\small basar1@illinois.edu, knguyen4@illinois.edu}}%
\thanks{Tansu Alpcan is with the Deutsche Telekom Laboratories and the Technical University of Berlin, Ernst-Reuter-Platz 7, D-10587 Berlin, Germany
        {\tt\small tansu.alpcan@telekom.de}}%
}

\begin{document}

\maketitle
\thispagestyle{empty}
\pagestyle{empty}

%%%%%%%%%%%%%%%%%%%%%%%%%%%%%%%%%%%%%%%%%%%%%%%%%%%%%%%%%%%%%%%%%%%%%%%%%%%%%%%%
\begin{abstract}

We study two-player security games which can be viewed as sequences of nonzero-sum matrix games played by an Attacker and a Defender. The evolution of the game is based on a stochastic fictitious play process, where players do not have access to each other's payoff matrix. Each has to observe the other's actions up to present and plays the action generated based on the best response to these observations. In a regular fictitious play process, each player makes a maximum likelihood estimate of her opponent's mixed strategy, which results in a time-varying update based on the previous estimate and current action. In this paper, we explore an alternative scheme for frequency update, whose mean dynamic is instead time-invariant. We examine convergence properties of the mean dynamic of the fictitious play process with such an update scheme, and establish local stability of the equilibrium point when both players are restricted to two actions. We also propose an adaptive algorithm based on this time-invariant frequency update.

\end{abstract}

\section{Introduction}

Game theory has recently been used as an effective tool to model and solve many security problems in computer and communication networks. In a noncooperative matrix game between an Attacker and a Defender, if the payoff matrices are assumed to be known to both players, each player can compute the set of Nash equilibria of the game and play one of these strategies to maximize her expected gain (or minimize its expected loss). However, in practice, the players do not necessarily have full knowledge of each other's payoff matrix. For repeated games, a mechanism called fictitious play (FP) can be used for each player to learn her opponent's motivations. In a FP process, each player observes all the actions and makes estimates of the mixed strategy of her opponent. At each stage, she updates this estimate and plays the pure strategy that is the best response (or generated based on the best response) to the current estimate of the other's mixed strategy. It can be seen that in a FP process, if one player plays a fixed strategy (either of the pure or mixed type), the other player's sequence of strategies will converge to the best response to this fixed strategy. Furthermore, it has been shown that, for many classes of games, such a FP process will finally render both players playing a Nash equilibrium (NE).

Specifically, we examine a two-player game, where an Attacker (denoted as player $1$ or $P_1$) and a Defender (denoted as player $2$ or $P_2$) participate in a discrete-time repeated nonzero-sum matrix game. In a general setting, the Attacker has $m$ possible actions and the Defender has $n$ posssible actions to choose from. For example, when $m=n=2$, the Attacker's actions could be to attack one node in a two-node network, and those of the Defender are to defend one of these two nodes. Players do not have access to each other's payoff function. They adjust their strategies based on each other's actions which they observe.

In a stochastic FP process, each player makes a maximum likelihood estimation of her opponent's mixed strategy. As will be seen later on, this will result in a time-varying update of the opponent's empirical frequency, where the weight of the action at time step $k$ is $1/k$. In a practical repeated security game, however, we notice a couple of possible complications. First, players may not have the exact and synchronized time steps. Second, each player may want to adjust the weight of the other's current action to converge either faster or more accurately to the equilibrium. A more flexible scheme to update the estimate of the mixed strategy may be needed in such situations. Motivated by these practical considerations, we examine in this paper a time-invariant frequency update mechanism for fictitious play. Also, as a side note, such a time-invariant update mechanism will allow us to use the analysis tools applicable only to time-invariant systems. 

Security games have been examined extensively in a large number of papers, see for example, \cite{alpcancdc03ids,alpcancdc04,alpcanisdg06,chenphd,sallphd,gamnets-kien}. Surveys on applications of game theory to network security can be found in \cite{LCA-BOOK-2006-001}, \cite{royshan}. Relevant literature on fictitious play can be found in  \cite{robinson, Miyasawa, uberger, sa04, sa05, MSA05,icc09-kien,acc10nab}. A comprehensive exposition of learning in games can be found in \cite{fule}.

The rest of this paper is organized as follows. In Section \ref{sec:tifp}, we provide an overview of the static game and the standard stochastic FP process, and then introduce the stochastic FP with time-invariant frequency update. The analysis for FP with time-invariant frequency update is given in Section \ref{sec:analysis}. In Section \ref{sec:atifp}, we introduce an adaptive algorithm based on the time-invariant FP process. Next, simulation results are given in Section \ref{sec:sim}. Finally, some concluding remarks will end the paper.
%%%%%%%%%%%%%%%%%%%%%%%%%%%%%%%%%%%%%%%%%%%%%%%%%%%%%%%%%%%%%%%%%%%%%%%%%%%%%%%
\section{Fictitious Play with Time-Invariant Frequency Update} \label{sec:tifp}
%%%%%%%%%%%%%%%%%%%%%%%%%%%%%%%%%%%%%%%%%%%%%%%%%%%%%%%%%%%%%%%%%%%%%%%%%%%%%%%
In this Section, we present first an overview of a two-player static games, then the concept of Stochastic Fictitious Play with Time-Varying Frequency Update (TVFU-FP) \cite{sa04,sa05,icc09-kien,acc10nab}, and finally the concept of Stochastic Fictitious Play with Time-Invariant Frequency Update (TIFU-FP). While we introduce both classical version and stochastic version of static games, we restrict ourseves to only stochastic fictitious play in Subsections \ref{ss:tvfp} and \ref{ss:tifp} and in the rest of the paper. 
%%%%%%%%%%%%%%%%%%%%%%%%%%%%%%%%%%%%%%%%%%%%%%%%%%%%%%%%%%%%%%%%%%%%%%%%%%%%%%%
\subsection{Static Games} \label{static}
%%%%%%%%%%%%%%%%%%%%%%%%%%%%%%%%%%%%%%%%%%%%%%%%%%%%%%%%%%%%%%%%%%%%%%%%%%%%%%%
We consider here static security games, where each player $P_i,\ i=1,2$, has two possible actions (or pure strategies). We use $v_i$, to denote the action of $P_i$. Let $\Delta(2)$ be the simplex in $\Re^2$, i.e., 
  \begin{equation}
  \Delta(2) \equiv \left\{ s \in \Re^2 \vert s_1,s_2 \geq 0 \textrm{ and } s_1+s_2 = 1 \right\}.
  \end{equation}
  Each $v_i$ takes value in the set of (two) vertices of $\Delta(2)$: $v_i=[1 \ 0]^T$ for the first action, and $v_i=[0 \ 1]^T$ for the second action. In a static game, player $P_i$ selects an action $v_i$ according to a mixed strategy $p_i \in \Delta(2)$. The (instant) payoff for player $P_i$ is\footnote{As standard in the game theory literature, the index $-i$ is used to indicate those of other players, or the opponent in this case.} $v^T_i M_i v_{-i} + \tau_i H(p_i)$, where $M_i$ is the payoff matrix of $P_i$, and $H(p_i)$ is the entropy of the probability vector $p_i$, $H(p_i)=-p_i^T log(p_i)$. The weighted entropy $\tau_i H(p_i)$ with $\tau_i \geq 0$ is introduced to boost mixed strategies. In a security game, $\tau_i$ signifies how much player $i$ wants to randomize its actions, and thus is not necessarily known to the other player. Also, for $\tau_1=\tau_2=0$ (referred to as classical FP), the best response mapping can be set-valued, while it has a unique value when $\tau_i>0$ (referred to as stochastic FP). For a pair of mixed strategy $(p_1,p_2)$, the utility functions are given by the expected payoffs:
   \begin{eqnarray} \label{utility}
\nonumber  U_i(p_i,p_{-i}) &=& E \left[ v^T_i M_i v_{-i} \right] + \tau_i H(p_i) \\
     &=& p_i^T M_i p_{-i} + \tau_i H(p_i).
  \end{eqnarray}
  Now,	the \textit{best response} mappings $\beta_i: \Delta(2) \rightarrow  \Delta(2)$ are defined as:
  \begin{equation} \label{utility_max}
  \beta_i(p_{-i}) = \arg \max_{p_i \in \Delta(2)} {U_i(p_i,p_{-i})}.
  \end{equation}
  If $\tau_i>0$, the best response is unique as mentioned earlier, and is given by:
  \begin{equation} \label{best_response}
	\beta_i(p_{-i}) = \sigma \left( \frac{M_i p_{-i}}{\tau_i} \right),
  \end{equation}
  where the soft-max function $\sigma: \Re^2 \rightarrow \ \textrm{Interior}(\Delta(2))$ is defined as
  \begin{equation} \label{softmax}
  (\sigma(x))_j = \frac{e^{x_j}} {e^{x_1}+ e^{x_2}}, j=1,2.
  \end{equation}
  Note that $(\sigma(x))_j > 0$, and thus the range of the soft-max function is just the interior of the simplex.
  
  Finally, a (mixed strategy) Nash equilibrium is defined to be a pair $(\bar{p}_1, \bar{p}_2) \in \Delta(2) \times \Delta(2)$ such that for all $p_i \in \Delta(2)$
  \begin{equation} \label{saddle_point}
  U_i(p_i,\bar{p}_{-i}) \leq U_i(\bar{p}_i,\bar{p}_{-i}).
  \end{equation}
  We can also write a Nash equilibrium $(\bar{p}_1, \bar{p}_2)$ as the fixed point of the best response mappings:
  \begin{equation} \label{best_response_mapping}
  \bar{p}_i = \beta _i (\bar{p}_{-i}), \ i=1,2.
  \end{equation}  
%%%%%%%%%%%%%%%%%%%%%%%%%%%%%%%%%%%%%%%%%%%%%%%%%%%%%%%%%%%%%%%%%%%%%%%%%%%%%%%
\subsection{Stochastic Fictitious Play with Time-Varying Frequency Update}  \label{ss:tvfp}
%%%%%%%%%%%%%%%%%%%%%%%%%%%%%%%%%%%%%%%%%%%%%%%%%%%%%%%%%%%%%%%%%%%%%%%%%%%%%%%
From the static game described in Subsection \ref{static}, we define the discrete-time TVFU-FP as follows. Suppose that the game is repeated at times $k \in \left\{0,1,2,\ldots \right\}$. The empirical frequency $q_i(k)$ of player $P_i$ is given by
\begin{equation} \label{em_freq}
q_i(k+1) = \frac{1}{k+1} \sum_{j=0}^k v_i(j).
\end{equation}
Using induction, we can prove the following recursive relation:
\begin{equation} \label{rec_em_freq}
q_i(k+1)  =  \frac{k}{k+1} q_i(k) + \frac{1}{k+1} v_i(k).
\end{equation}
%%%%%%%%%%%%%%%%%%%%%%%%%%%%%%%%%%%%%%%%%%%%%%%%%%%%%%%%%%%%%%%%%%%%%%%%%%%%%%%
%%%%%%%%%%%%%%%%%%%%%%%%%%%%%%%%%%%%%%%%%%%%%%%%%%%%%%%%%%%%%%%%%%%%%%%%%%%%%%%
%\subsection{Continuous-time fictitious play}  \label{ss:ctfp}
%%%%%%%%%%%%%%%%%%%%%%%%%%%%%%%%%%%%%%%%%%%%%%%%%%%%%%%%%%%%%%%%%%%%%%%%%%%%%%%
From the equations of discrete-time TVFU-FP (\ref{em_freq}), (\ref{rec_em_freq}), the continuous-time version of the iteration can be written down as follows \cite{icc09-kien}:
\begin{eqnarray} \label{ctfp}
 \dot{p}_i(t) &=& \beta_i(p_{-i}(t)) -  p_i(t), \ i=1,2.
\end{eqnarray}
%%%%%%%%%%%%%%%%%%%%%%%%%%%%%%%%%%%%%%%%%%%%%%%%%%%%%%%%%%%%%%%%%%%%%%%%%%%%%%%

%%%%%%%%%%%%%%%%%%%%%%%%%%%%%%%%%%%%%%%%%%%%%%%%%%%%%%%%%%%%%%%%%%%%%%%%%%%%%%%
\subsection{Stochastic Fictitious Play with Time-Invariant Frequency Update}  \label{ss:tifp}
%%%%%%%%%%%%%%%%%%%%%%%%%%%%%%%%%%%%%%%%%%%%%%%%%%%%%%%%%%%%%%%%%%%%%%%%%%%%%%%

In TVFU-FP, players take the maximum likelihood estimate of the mixed strategy of their opponent (\ref{em_freq}), (\ref{rec_em_freq}). In TIFU-FP, the estimates of the mixed strategies will be calculated in a time-invariant manner as follows:
\begin{eqnarray} 
\label{tiest1} r_i(1) &=& v_i(0), \\
\label{tiest2} r_i(k+1)  &=&  (1-\eta) r_i(k) + \eta v_i(k),
\end{eqnarray}
where $\eta$ is a constant and $0 < \eta < 1$. For each player, this is basically the exponential smoothing formula used in time series analysis (See for example \cite{kellerwarrack00}). We will prove that with this formulation, at time $k$, $r_i(k)$ will be a weighted average of all the actions up to present of player $i$ where more recent actions have higher weights.
%%%%%%%%%%%%%%%%%%%%%%%%%%%%%%%%%%%%%%%%%%%%%%%%%%%%%%%%%%%%%%%%%%%%%%%%%%%%%%%
Suppose that the payoff matrices of player $1$ and player $2$ are, respectively,
\begin{equation}
M_1 = \left(
\begin{array}{cc}
	a	&	b \\
	c	&	d \\
\end{array}
\right), \  \
M_2 = \left(
\begin{array}{cc}
	e	&	g \\
	f	&	h \\
\end{array}
\right).
\end{equation}
%%%%%%%%%%%%%%%%%%%%%%%%%%%%%%%%%%%%%%%%%%%%%%%%%%%%%%%%%%%%%%%%%%%%%%%%
\begin{assumption} \label{as:payoff}
Based on a realistic security game, we can make the following assumptions:
\begin{itemize}
\item $a<c$: When the Defender defends, the payoff of the Attacker will be decreased if it attacks.
\item $b>d$: When the Defender does not defend, the payoff of the Attacker will be increased if it attacks.
\item $e>f$: When the Attacker attacks, the payoff of the Defender will be decreased if it does not defend.
\item $g<h$: When the Attacker does not attack, the payoff of the Defender will be increased if it does not defend.
\end{itemize}
\end{assumption}
%%%%%%%%%%%%%%%%%%%%%%%%%%%%%%%%%%%%%%%%%%%%%%%%%%%%%%%%%%%%%%%%%%%%%%%%%%%%%%%
%%%%%%%%%%%%%%%%%%%%%%%%%%%%%%%%%%%%%%%%%%%%%%%%%%%%%%%%%%%%%%%%%%%%%%%%%%%%%%%
\begin{algorithm}[htp]
\caption{Fictitious Play with Time-Invariant Frequency Update.} \label{alg:tifp}
\begin{algorithmic}[1]
\STATE Given payoff matrix $M_i$, coefficient $\tau_i>0$, $i=1,2$.
\FOR{$k \in \left\{0,1,2,\ldots \right\}$}
\STATE Update the estimated frequency of the opponent using (\ref{tiest1}), (\ref{tiest2}).
\STATE Compute the best response using (\ref{best_response}). (Note that the result is always a completely mixed strategy.)
\STATE Randomly play an action $v_i(k)$ according to the best response mixed strategy $\beta_i(r_{-i}(k))$.
 \ENDFOR
\end{algorithmic}
\end{algorithm}
%%%%%%%%%%%%%%%%%%%%%%%%%%%%%%%%%%%%%%%%%%%%%%%%%%%%%%%%%%%%%%%%%%%%%%%%%%%%%%
In TIFU-FP, both players employ Algorithm \ref{alg:tifp}. The mean dynamic of the evolution of TIFU-FP can be written as:
\begin{eqnarray} \label{est_mean_dynamic}
r_i(k+1)  =  (1-\eta) r_i(k) + \eta \beta_i(r_{-i}(k)), \ i=1,2.
 \end{eqnarray} 
%%%%%%%%%%%%%%%%%%%%%%%%%%%%%%%%%%%%%%%%%%%%%%%%%%%%%%%%%%%%%%%%%%%%%%%%%%%%%%%
Note that Equations (\ref{est_mean_dynamic}) are just evolution of the \textit{estimated frequencies}; the \textit{empirical frequencies} still evolve in a time-varying manner:
\begin{eqnarray} 
q_i(k+1)  =  \frac{k}{k+1} q_i(k) + \frac{1}{k+1} v_i(k), \ i=1,2.
\end{eqnarray}
The mean dynamic of empirical frequencies then can be written as 
\begin{eqnarray}
q_i(k+1)  =  \frac{k}{k+1} q_i(k) + \frac{1}{k+1} \beta_i(r_{-i}(k)),\  i=1,2.
\end{eqnarray}
%%%%%%%%%%%%%%%%%%%%%%%%%%%%%%%%%%%%%%%%%%%%%%%%%%%%%%%%%%%%%%%%%%%%%%%%%%%%%%%
\section{Analysis} \label{sec:analysis}
%%%%%%%%%%%%%%%%%%%%%%%%%%%%%%%%%%%%%%%%%%%%%%%%%%%%%%%%%%%%%%%%%%%%%%%%%%%%%%%
\subsection{Nash Equilibrium of the Static Game}
We start the analysis with the following result for the static games given in Subsection \ref{static}.
%%%%%%%%%%%%%%%%%%%%%%%%%%%%%%%%%%%%%%%%%%%%%%%%%%%%%%%%%%%%%%%%%%%%%%%%%%%%%%%
\begin{proposition}
The static $2$-player $2$-action game in \ref{static} with Assumption \ref{as:payoff} and $\tau_1,\tau_2>0$ admits a unique Nash equilibrium.
\end{proposition}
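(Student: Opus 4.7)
The plan is to parameterize $\Delta(2)\times\Delta(2)$ by the two scalars $x := (p_1)_1 \in [0,1]$ and $y := (p_2)_1 \in [0,1]$, reduce the fixed-point condition (\ref{best_response_mapping}) to a scalar equation, and then use monotonicity plus the intermediate value theorem. Existence of at least one equilibrium is cheap: the joint best-response map $(p_1,p_2)\mapsto(\beta_1(p_2),\beta_2(p_1))$ is continuous on the compact convex set $\Delta(2)\times\Delta(2)$ by (\ref{best_response})--(\ref{softmax}), so Brouwer gives a fixed point. The content of the proposition is uniqueness.

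First I would unfold the softmax (\ref{softmax}) applied to $M_1 p_2/\tau_1$ and $M_2 p_1/\tau_2$ to rewrite (\ref{best_response_mapping}) as the pair of explicit logistic equations
\[
x = f_1(y) := \frac{1}{1+\exp\!\big(((c-a+b-d)\,y-(b-d))/\tau_1\big)},
\]
\[
y = f_2(x) := \frac{1}{1+\exp\!\big(((f-e+g-h)\,x+(h-g))/\tau_2\big)}.
\]
Then I would invoke Assumption \ref{as:payoff} for sign information: $a<c$ and $d<b$ make the coefficient of $y$ in the first exponent strictly positive, so $f_1$ is continuous and strictly decreasing on $[0,1]$; symmetrically $f<e$ and $g<h$ make the coefficient of $x$ in the second exponent strictly negative, so $f_2$ is continuous and strictly increasing on $[0,1]$.

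Composing, a Nash equilibrium coincides with a fixed point of $g := f_1\circ f_2 : [0,1]\to(0,1)$. Since $f_1$ and $f_2$ are strictly monotone in opposite directions, $g$ is continuous and strictly decreasing, and since $\sigma$ maps into the open simplex we have $g(0)>0$ and $g(1)<1$. Hence $x\mapsto g(x)-x$ is continuous and strictly decreasing, positive at $x=0$ and negative at $x=1$, so the intermediate value theorem yields a unique $\bar x\in(0,1)$ with $g(\bar x)=\bar x$. Setting $\bar y := f_2(\bar x)$ then pins down the unique $(\bar p_1,\bar p_2)$ satisfying (\ref{best_response_mapping}).

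The main obstacle is really just sign bookkeeping: one must verify that the four inequalities of Assumption \ref{as:payoff} line up so that $f_1$ and $f_2$ have \emph{opposite} monotonicity. If, say, both best-response curves happened to be increasing in the opponent's probability, then $g$ would also be increasing and the IVT would no longer preclude multiple crossings; one would have to fall back on a contraction argument in $\tau_1,\tau_2$ or a concavity argument in the potential. The security-game structure encoded in Assumption \ref{as:payoff} is precisely what rules this case out and produces the clean monotone composition used above.
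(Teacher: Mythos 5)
Your proposal is correct and follows essentially the same route as the paper: Brouwer's fixed point theorem for existence, and the opposite monotonicity of the two best-response curves (decreasing for the Attacker, increasing for the Defender, both forced by Assumption \ref{as:payoff}) for uniqueness. Packaging the uniqueness step as ``the composition $f_1\circ f_2$ is strictly decreasing, hence has at most one fixed point'' is just a restatement of the paper's two-equilibria contradiction argument, and your sign bookkeeping on the logistic exponents checks out.
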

%%%%%%%%%%%%%%%%%%%%%%%%%%%%%%%%%%%%%%%%%%%%%%%%%%%%%%%%%%%%%%%%%%%%%%%%%%%%%%%
\begin{figure}[ht]
\centering
  % Requires \usepackage{graphicx}
  \includegraphics[width=8cm]{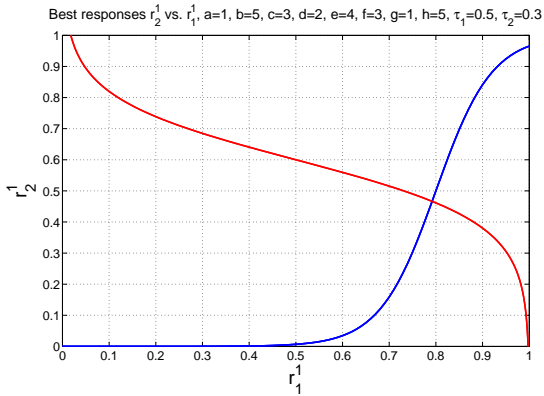} \\
  \caption{Static $2$-player $2$-action game in \ref{static} with Assumption \ref{as:payoff} and $\tau_1,\tau_2>0$ - Best response mappings.}
   \label{fig:sfp_best_response} 
\end{figure}
%%%%%%%%%%%%%%%%%%%%%%%%%%%%%%%%%%%%%%%%%%%%%%%%%%%%%%%%%%%%%%%%%%%%%%%%%%%%%%%
%%%%%%%%%%%%%%%%%%%%%%%%%%%%%%%%%%%%%%%%%%%%%%%%%%%%%%%%%%%%%%%%%%%%%%%%%%%%%%%
\begin{proof}
In what follows, let $r_1 \equiv (r^1_1, r^2_1)^T$, $r_2 \equiv (r^1_2, r^2_2)^T$, $\beta_1(r_2) \equiv (\beta_1^1(r_2), \beta_1^2(r_2))^T$, and $\beta_2(r_2) \equiv (\beta_2^1(r_2), \beta_2^2(r_2))^T$.
We first use the Brouwer fixed point theorem (see for example \cite{basols99}) to prove the existence of a Nash equilibrium, then use monotonicity of $\beta^1_1(r_2^1)$ and $\beta^1_2(r_1^1)$ to prove that the fixed point is unique. Here we write $\beta^1_1(r_2^1)$ as a scalar-valued function of the only independent variable $r_2^1$ using the fact that $r_2 \in \Delta(2)$, or $r_2^1+r_2^2=1$. Function $\beta^1_2(r_1^1)$ is defined similarly. Also, as $r_1,r_2 \in \Delta(2)$, a pair $(r^1_1, r^1_2)$ completely specifies the estimated frequencies of the players.
As seen from Equation (\ref{best_response_mapping}), a Nash equilibrium $(\bar{r}_1, \bar{r}_2)$ is a fixed point of the best response mapping:
  \begin{eqnarray} 
\nonumber   r_i &=& \beta_i (r_{-i}), \ i=1,2.
  \end{eqnarray} 
It suffices to write this mapping as $r = \beta (r)$, where $r = (r^1_1, r^1_2)^T$. Specifically, the mapping $\beta$ can be detailed as:
\begin{eqnarray}
\nonumber r^1_1 &=& \beta^1_1(r_2^1)=\left( \sigma \left( \frac{M_1r_2}{\tau_1} \right) \right)_1  \\
\nonumber &=& \frac{e^{ \left\{ \frac{1}{\tau_1} [a r^1_2 + b(1-r^1_2)] \right\} }} 
{e^ {\left\{ \frac{1}{\tau_1} [a r^1_2 + b(1-r^1_2)] \right\}}+e^{ \left\{ \frac{1}{\tau_1} [c r^1_2 + d(1-r^1_2)] \right\}}}.
\end{eqnarray}
It can be seen that $\beta^1_1(r_2^1) \in (0,1)$. Similarly, we have
\begin{eqnarray}
\nonumber r^1_2 &=& \beta^1_2(r_1^1) \in (0,1). 
\end{eqnarray}
Thus $\beta$ is a transformation from $[0,1]^2$ to $[0,1]^2$, which is a compact convex set. As both mappings $\beta^1_1$ and $\beta^1_2$ that constitute $\beta$ are continuous, $\beta$ is also a continuous transformation. Using the Brouwer fixed point theorem, there exists at least one fixed point $\bar{r}$ such that $\bar{r} = \beta (\bar{r})$, which is a Nash equilibrium of the static game. Now we examine the derivatives of $\beta^1_1(r_2^1)$ and $\beta^1_2(r_1^1)$ with respect to their own independent variables:
\begin{eqnarray}
\nonumber \frac{ d\beta^1_1(r_2)}{d r^1_2} &=& \frac{1}{\tau_1} [(a-c)+(d-b)] \beta^1_1(r_2) \beta^2_1(r_2),\\
\nonumber \frac{ d\beta^1_2(r_1)}{d r^1_1} &=& \frac{1}{\tau_2} [(e-f)+(h-g)]\beta^2_1(r_1) \beta^2_2(r_1).
\end{eqnarray}
From Assumption \ref{as:payoff}, $(a-c)+(d-b)<0$ and $(e-f)+(h-g)>0$. Thus $\beta^1_1(r_2^1)$ is strictly decreasing in $r^1_2$, and $\beta^1_2(r_1^1)$ is strictly increasing in $r^1_1$. Now suppose that there exist two distinct Nash equilibria, $(r'_1, r'_2)$ and $(r''_1, r''_2)$. Obviously, $r'_1 \neq r''_1$, otherwise we will have $r'_2 = r''_2$, and these two points coincide. Without loss of generality, assume that $r'_1 < r''_1$. As $\beta^1_2(r_1^1)$ is strictly increasing in $r^1_1$, we have that $r'_2 < r''_2$. However, $\beta^1_1(r_2^1)$ is strictly decreasing in $r^1_2$, so $r'_1 > r''_1$, which is contradictory to the initial assumption. Thus the Nash equilibrium is unique.
\end{proof}
We illustrate in Figure \ref{fig:sfp_best_response} the curves $\beta^1_1(r_2^1)$ and $\beta^1_2(r_1^1)$ with the values of $M_1$, $M_2$, $\tau_1$, and $\tau_2$ as shown. The intersection of these two curves is the Nash equilibrium of the static game.
%%%%%%%%%%%%%%%%%%%%%%%%%%%%%%%%%%%%%%%%%%%%%%%%%%%%%%%%%%%%%%%%%%%%%%%%%%%%%%%
\subsection{Estimated Frequencies and Empirical Frequencies}
We present here two propositions for TIFU-FP: The first shows the weights of each player's actions in the estimated frequency, and the second shows the relationship between estimated frequencies and empirical frequencies.
\begin{proposition}
For $k \geq 2$, the estimated frequencies in TIFU-FP constructed using (\ref{tiest1}), (\ref{tiest2}) will satisfy
\begin{eqnarray}
\nonumber r_i(k) &=& (1-\eta)^{k-1} v_i(0) + (1-\eta)^{k-2} \eta v_i(1) \\
\nonumber &&+  (1-\eta)^{k-3} \eta v_i(2) + \ldots + (1-\eta) \eta v_i(k-2) \\
\label{rik} &&+ \eta v_i(k-1),
\end{eqnarray}
where $i=1,2$.
\end{proposition}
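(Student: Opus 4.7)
The plan is to prove the closed-form expression by straightforward induction on $k$, using the recursion (\ref{tiest2}) as the inductive step and the initialization (\ref{tiest1}) as the base case.

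For the base case $k=2$, I will compute $r_i(2)$ directly from (\ref{tiest2}) applied with $k=1$: $r_i(2) = (1-\eta) r_i(1) + \eta v_i(1) = (1-\eta) v_i(0) + \eta v_i(1)$, using $r_i(1) = v_i(0)$ from (\ref{tiest1}). This matches the claimed formula (\ref{rik}) at $k=2$, which reads $(1-\eta)^{1} v_i(0) + \eta v_i(1)$.

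For the inductive step, I assume (\ref{rik}) holds at some fixed $k \geq 2$, then apply the recursion $r_i(k+1) = (1-\eta) r_i(k) + \eta v_i(k)$. Substituting the inductive hypothesis and distributing the factor $(1-\eta)$ through every term raises each exponent of $(1-\eta)$ by one; appending $+\eta v_i(k)$ then produces exactly the expression (\ref{rik}) with $k$ replaced by $k+1$. All that remains is to verify the two endpoint terms match: the leading term becomes $(1-\eta)^{k} v_i(0)$, matching the pattern $(1-\eta)^{(k+1)-1} v_i(0)$; and the new trailing term $\eta v_i(k)$ matches the pattern. Since the coefficients of $v_i(0), v_i(1), \ldots, v_i(k-1), v_i(k)$ sum to $(1-\eta)^k + \eta[(1-\eta)^{k-1} + \cdots + (1-\eta) + 1] = (1-\eta)^k + \eta \cdot \frac{1-(1-\eta)^k}{\eta} = 1$, one can also note as a sanity check that $r_i(k)$ is a genuine convex combination of past actions, which makes the interpretation as a weighted average with geometrically decaying weights transparent.

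There is no real obstacle here; the only care needed is bookkeeping of the exponents (the first term carries $(1-\eta)^{k-1}$ rather than $(1-\eta)^k$ because $v_i(0)$ was not multiplied by $\eta$ at initialization), and ensuring the base case is taken at $k=2$ rather than $k=1$, since (\ref{rik}) degenerates when $k=1$. This edge case is already handled by the definition (\ref{tiest1}).
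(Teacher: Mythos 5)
Your proof is correct and follows exactly the route the paper intends: the paper's own proof is just the one-line remark that the result ``can be proved using induction,'' and your write-up supplies precisely that induction, with the base case correctly anchored at $k=2$ via (\ref{tiest1}) and the inductive step carried by (\ref{tiest2}). The convex-combination sanity check is a nice bonus but not needed.
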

%%%%%%%%%%%%%%%%%%%%%%%%%%%%%%%%%%%%%%%%%%%%%%%%%%%%%%%%%%%%%%%%%%%%%%%%%%%%%%%
\begin{proof}
This result can be proved using induction.
\end{proof}
%%%%%%%%%%%%%%%%%%%%%%%%%%%%%%%%%%%%%%%%%%%%%%%%%%%%%%%%%%%%%%%%%%%%%%%%%%%%%%%
\begin{proposition}
In TIFU-FP, the empirical frequencies are related to the estimated frequencies calculated using (\ref{tiest1}), (\ref{tiest2}) through the following equation:
\begin{eqnarray} \label{empest}
\nonumber q_i(k+1) = \frac{1}{k+1} && \left(  \frac{2 \eta -1}{\eta} r_i(1) + r_i(2) + \ldots + r_i(k) \right. \\
&& \left. + \frac{r_i(k+1)}{\eta} \right), \ i=1,2.
 \end{eqnarray}
\end{proposition}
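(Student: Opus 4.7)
The plan is to invert the exponential-smoothing recursion in order to express each pure action $v_i(j)$ in terms of the estimates $r_i(j)$, then substitute into the definition of the empirical frequency and collect coefficients.

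First, I would solve the update rule (\ref{tiest2}) for $v_i(k)$, obtaining for all $k \geq 1$
\begin{equation*}
v_i(k) \;=\; \frac{r_i(k+1) - (1-\eta)\, r_i(k)}{\eta},
\end{equation*}
and I would use (\ref{tiest1}) as the base case $v_i(0) = r_i(1)$. This inversion is the key step: it lets every $v_i(j)$ that appears in the empirical-frequency sum be written in terms of consecutive values of $r_i$.

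Next, starting from the definition $q_i(k+1) = \tfrac{1}{k+1}\sum_{j=0}^{k} v_i(j)$, I would substitute the above expressions and split the resulting sum into two telescoping-like sums, one indexed over $r_i(j+1)$ and one over $r_i(j)$. After reindexing, I collect coefficients of each $r_i(j)$: the middle terms $r_i(2),\dots,r_i(k)$ receive coefficient $\tfrac{1}{\eta}-\tfrac{1-\eta}{\eta}=1$, the endpoint $r_i(k+1)$ receives coefficient $\tfrac{1}{\eta}$, and $r_i(1)$ receives $1 - \tfrac{1-\eta}{\eta} = \tfrac{2\eta-1}{\eta}$ (the contribution from $v_i(0)$ combined with the $-\tfrac{1-\eta}{\eta}r_i(1)$ coming from the $j=1$ term). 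Dividing by $k+1$ yields exactly (\ref{empest}).

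There is no real obstacle here; the only thing to be careful about is bookkeeping, specifically making sure the $r_i(1)$ coefficient picks up both a $+1$ from the boundary identity $v_i(0)=r_i(1)$ and a $-\tfrac{1-\eta}{\eta}$ from the $j=1$ summand, combining to $\tfrac{2\eta-1}{\eta}$. Alternatively one could prove (\ref{empest}) by induction on $k$: verify it for $k=0$ (reducing to $q_i(1) = v_i(0) = r_i(1)$ after simplification), and then use the recursion $q_i(k+2) = \tfrac{k+1}{k+2}q_i(k+1) + \tfrac{1}{k+2}v_i(k+1)$ together with (\ref{tiest2}) to pass from $k$ to $k+1$; this is mostly cosmetic relative to the direct algebraic route above.
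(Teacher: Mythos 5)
Your direct computation is correct and is exactly the argument the paper sketches (its entire proof is the remark that one should write the actions at times $0,1,\ldots,k$ in terms of the estimates at times $1,2,\ldots,k+1$); your inversion $v_i(j)=\bigl(r_i(j+1)-(1-\eta)r_i(j)\bigr)/\eta$ and the resulting coefficients $\tfrac{2\eta-1}{\eta}$, $1$, and $\tfrac{1}{\eta}$ all check out for $k\geq 1$. One small caveat on your alternative induction route: the base case should be $k=1$ rather than $k=0$, since at $k=0$ the terms $r_i(1)$ and $r_i(k+1)$ coincide and the displayed formula degenerates to $q_i(1)=\tfrac{2\eta-1}{\eta}r_i(1)+\tfrac{1}{\eta}r_i(1)=2r_i(1)$, which is twice the correct value $v_i(0)=r_i(1)$.
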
 
%%%%%%%%%%%%%%%%%%%%%%%%%%%%%%%%%%%%%%%%%%%%%%%%%%%%%%%%%%%%%%%%%%%%%%%%%%%%%%%
\begin{proof}
This result can be proved by writing the actions of player $P_i$ at times $0,1,\ldots,k$ in terms of the estimated frequencies at times $1,2,\ldots,(k+1)$.
\end{proof}
%%%%%%%%%%%%%%%%%%%%%%%%%%%%%%%%%%%%%%%%%%%%%%%%%%%%%%%%%%%%%%%%%%%%%%%%%%%%%%%
\subsection{Convergence Properties of the Mean Dynamic in TIFU-FP} 
%%%%%%%%%%%%%%%%%%%%%%%%%%%%%%%%%%%%%%%%%%%%%%%%%%%%%%%%%%%%%%%%%%%%%%%%%%%%%%%
\begin{theorem} \label{thm:conv}
Consider a TIFU-FP with Assumption \ref{as:payoff} and $\tau_1,\tau_2>0$. The mean dynamic  given in Equations (\ref{est_mean_dynamic}) is asymptotically stable if and only if
\begin{equation} \label{conv_tifufp}
\eta < \frac{2}{\frac{\left[(c-a)+(b-d)\right]\left[(e-f)+(h-g)\right]}{\tau_1 \tau_2} \bar{r}^1_1 \bar{r}^2_1 \bar{r}^2_1 \bar{r}^2_2 +1 }.
\end{equation}
\end{theorem}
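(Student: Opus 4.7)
The plan is to reduce the mean dynamic (\ref{est_mean_dynamic}) to a two-dimensional autonomous discrete-time map, linearize about its unique fixed point, and apply the spectral criterion that asymptotic stability is equivalent to every eigenvalue of the Jacobian lying strictly inside the open unit disk. Since each $r_i(k)\in\Delta(2)$, the state is fully captured by $x(k):=r_1^1(k)$ and $y(k):=r_2^1(k)$, and (\ref{est_mean_dynamic}) becomes $x(k+1)=(1-\eta)x(k)+\eta\beta_1^1(y(k))$ and $y(k+1)=(1-\eta)y(k)+\eta\beta_2^1(x(k))$, whose unique fixed point is the Nash equilibrium $(\bar r_1^1,\bar r_2^1)$ identified in Proposition~1.

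Next I would read off the Jacobian at that fixed point: both diagonal entries equal $1-\eta$, while the off-diagonal entries are $\eta b_1$ and $\eta b_2$, where $b_1$ and $b_2$ are the derivatives $d\beta_1^1/dr_2^1$ and $d\beta_2^1/dr_1^1$ already computed inside the proof of Proposition~1. Evaluated at the equilibrium they give $b_1=\tau_1^{-1}[(a-c)+(d-b)]\,\bar r_1^1\bar r_1^2<0$ and $b_2=\tau_2^{-1}[(e-f)+(h-g)]\,\bar r_2^1\bar r_2^2>0$ under Assumption~1, so $b_1 b_2<0$. The characteristic equation $(1-\eta-\lambda)^2=\eta^2 b_1 b_2$ therefore produces a complex-conjugate pair $\lambda_\pm=(1-\eta)\pm i\eta\sqrt{|b_1 b_2|}$, with $|\lambda_\pm|^2=(1-\eta)^2+\eta^2|b_1 b_2|$.

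Finally I would apply $|\lambda_\pm|^2<1$, which after expansion and division by $\eta>0$ collapses to $\eta(1+|b_1 b_2|)<2$; substituting the explicit expressions for $b_1$ and $b_2$ recovers condition (\ref{conv_tifufp}) verbatim and establishes the ``if'' direction by the discrete-time linearization theorem. The ``only if'' direction is the mirror image: when $\eta$ exceeds the threshold, $|\lambda_\pm|>1$ and the same theorem certifies linear instability.

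I do not expect any single step to be a serious obstacle, since the off-diagonal signs forced by Assumption~1 make $b_1 b_2<0$ and therefore eliminate the real-eigenvalue case that would otherwise require splitting into subcases. The one genuinely delicate point is the knife-edge value $\eta=2/(1+|b_1 b_2|)$, where the eigenvalues sit exactly on the unit circle and linearization is inconclusive; the theorem's strict inequality sidesteps this, so no center-manifold or Lyapunov argument is needed. A minor but worthwhile sanity check is that $(\bar r_1,\bar r_2)$ depends only on $M_1,M_2,\tau_1,\tau_2$ and not on $\eta$, so the right-hand side of (\ref{conv_tifufp}) is a bona fide constant.
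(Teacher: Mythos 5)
Your proof follows essentially the same route as the paper --- linearize (\ref{est_mean_dynamic}) at the unique fixed point, obtain the Jacobian with diagonal entries $1-\eta$ and off-diagonal entries $\eta b_1$, $\eta b_2$, and impose the unit-disk eigenvalue condition --- and in fact you supply the final eigenvalue computation (that $b_1 b_2<0$ under Assumption~\ref{as:payoff} forces a complex-conjugate pair with squared modulus $(1-\eta)^2+\eta^2\lvert b_1 b_2\rvert$, whence $\eta(1+\lvert b_1 b_2\rvert)<2$) that the paper leaves implicit. The only caveats are ones you share with, or improve upon, the paper: the knife-edge value of $\eta$ equal to the threshold is not settled by linearization (so the ``only if'' direction is strictly a statement about the linearized dynamics), and your derivation yields the product $\bar r^1_1\bar r^2_1\bar r^1_2\bar r^2_2$ in the denominator, which is the intended reading of the (apparently typographical) $\bar r^1_1\bar r^2_1\bar r^2_1\bar r^2_2$ in (\ref{conv_tifufp}).
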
 
\vspace{5 mm}
%%%%%%%%%%%%%%%%%%%%%%%%%%%%%%%%%%%%%%%%%%%%%%%%%%%%%%%%%%%%%%%%%%%%%%%%%%%%%%%
\begin{proof}
%%%%%%%%%%%%%%%%%%%%%%%%%%%%%%%%%%%%%%%%%%%%%%%%%%%%%%%%%%%%%%%%%%%%%%%%%%%%%%%
As can be seen in Equations (\ref{est_mean_dynamic}), this is a deterministic nonlinear discrete-time time-invariant system. We linearize the system at the fixed point and examine stability properties of the linearized system using techniques described in standard textbooks for nonlinear systems (e.g., \cite{KHA}). Using the mean dynamic (\ref{est_mean_dynamic}), where
\begin{equation}
r_1(k) = \left( 		
	\begin{array}{c}
			r^1_1(k)\\
			r^2_1(k)\\
		\end{array}
	\right),
r_2(k) = \left( 		
	\begin{array}{c}
			r^1_2(k)\\
			r^2_2(k)\\
		\end{array}
	\right),
\end{equation}
it can be seen that a pair $(\bar{r}_1,\bar{r}_2)$ that satisfies $\bar{r}_i = \beta_i(\bar{r}_{-i}), \ i=1,2$, is a fixed point of the system.
%%%%%%%%%%%%%%%%%%%%%%%%%%%%%%%%%%%%%%%%%%%%%%%%%%%%%%%%%%%%%%%%%%%%%%%%%%%%%%%
Consider the Jacobian matrix
\begin{equation}
\nonumber J=\frac{\partial F(r)}{\partial r} = 
 \left( 		
	\begin{array}{cc}
			\frac{\partial F_1(r)}{\partial r^1_1} & \frac{\partial F_1(r)}{\partial r^1_2}\\
			\frac{\partial F_2(r)}{\partial r^1_1} & \frac{\partial F_2(r)}{\partial r^1_2}\\
		\end{array}
	\right).
\end{equation}	
We have that
\begin{eqnarray}
\nonumber \frac{\partial F_1(r)}{\partial r^1_1} &=& \frac{\partial F_2(r)}{\partial r^1_2} = 1-\eta, \\
\nonumber \frac{\partial F_1(r)}{\partial r^1_2} &=& \eta \frac{ d\beta^1_1(r_2)}{d r^1_2}.
\end{eqnarray}
Recall that $\beta_1(r_2)=\sigma \left( \frac{M_1r_2}{\tau_1} \right)$, where
\begin{equation}
\nonumber \frac{M_1r_2}{\tau_1}= \left( 	\begin{array}{c}
			\frac{1}{\tau_1} [a r^1_2 + b(1-r^1_2)] \\
			\frac{1}{\tau_1} [c r^1_2 + d(1-r^1_2)] \\
		\end{array}
    \right).
\end{equation}
%%%%%%%%%%%%%%%%%%%%%%%%%%%%%%%%%%%%%%%%%%%%%%%%%%%%%%%%%%%%%%%%%%%%%%%%%%%%%%%
Thus
\begin{equation}
\nonumber \beta^1_1(r_2)=\frac{e^{ \left\{ \frac{1}{\tau_1} [a r^1_2 + b(1-r^1_2)] \right\} }} 
{e^ {\left\{ \frac{1}{\tau_1} [a r^1_2 + b(1-r^1_2)] \right\}}+e^{ \left\{ \frac{1}{\tau_1} [c r^1_2 + d(1-r^1_2)] \right\}}}.
\end{equation}
%%%%%%%%%%%%%%%%%%%%%%%%%%%%%%%%%%%%%%%%%%%%%%%%%%%%%%%%%%%%%%%%%%%%%%%%%%%%%%%
Then
\begin{equation}
\nonumber \frac{ d\beta^1_1(r_2)}{d r^1_2}= \frac{1}{\tau_1} [(a-c)+(d-b)] \beta^1_1(r_2) \beta^2_1(r_2),
\end{equation}
%%%%%%%%%%%%%%%%%%%%%%%%%%%%%%%%%%%%%%%%%%%%%%%%%%%%%%%%%%%%%%%%%%%%%%%%%%%%%%%
\begin{equation}
\nonumber \frac{\partial F_1(r)}{\partial r^1_2}= \frac{\eta}{\tau_1} [(a-c)+(d-b)] \beta^1_1(r_2) \beta^2_1(r_2).
\end{equation}
%%%%%%%%%%%%%%%%%%%%%%%%%%%%%%%%%%%%%%%%%%%%%%%%%%%%%%%%%%%%%%%%%%%%%%%%%%%%%%%
At the fixed point $(\bar{r}_1,\bar{r}_2)$, we can write
\begin{equation}
\nonumber \frac{\partial F_1(\bar{r})}{\partial r^1_2}= \frac{\eta}{\tau_1} [(a-c)+(d-b)] \bar{r}^1_1 \bar{r}^2_1.
\end{equation}
%%%%%%%%%%%%%%%%%%%%%%%%%%%%%%%%%%%%%%%%%%%%%%%%%%%%%%%%%%%%%%%%%%%%%%%%%%%%%%%
Similarly,
%%%%%%%%%%%%%%%%%%%%%%%%%%%%%%%%%%%%%%%%%%%%%%%%%%%%%%%%%%%%%%%%%%%%%%%%%%%%%%%
\begin{equation}
\nonumber \frac{\partial F_2(\bar{r})}{\partial r^1_1}= \frac{\eta}{\tau_2} [(e-f)+(h-g)]\bar{r}^1_2 \bar{r}^2_2 .
\end{equation}
%%%%%%%%%%%%%%%%%%%%%%%%%%%%%%%%%%%%%%%%%%%%%%%%%%%%%%%%%%%%%%%%%%%%%%%%%%%%%%%
Using the conditions for local stability, $\left| \mu_{1,2} \right| \leq 1$, where $\mu_{1,2}$ are eigenvalues of the Jacobian matrix, we finally have the condition in Equation (\ref{conv_tifufp}).
\end{proof}
%%%%%%%%%%%%%%%%%%%%%%%%%%%%%%%%%%%%%%%%%%%%%%%%%%%%%%%%%%%%%%%%%%%%%%%%%%%%%%%
\begin{remark}
Although this theorem only mentions the asymptotic stability of the estimated frequencies (of the mean dynamic), once these estimated frequencies converge to the Nash equilibrium, the best responses will also converge to the Nash equilibrium, and so will the empirical frequencies in the long run.
\end{remark}
%%%%%%%%%%%%%%%%%%%%%%%%%%%%%%%%%%%%%%%%%%%%%%%%%%%%%%%%%%%%%%%%%%%%%%%%%%%%%%%
\section{Adaptive Fictitious Play} \label{sec:atifp}
%%%%%%%%%%%%%%%%%%%%%%%%%%%%%%%%%%%%%%%%%%%%%%%%%%%%%%%%%%%%%%%%%%%%%%%%%%%%%%%
In this section we examine an adaptive FP algorithm (hereafter referred to as AFP) based on FP with Time-Invariant Frequency Update, where the step size $\eta$ is piecewise constant and decreased over time. 
For the specific implementation shown in Algorithm~\ref{alg:atifp}, the step size is either kept fixed or halved, based on the variance of empirical frequency in the previous time window.
%%%%%%%%%%%%%%%%%%%%%%%%%%%%%%%%%%%%%%%%%%%%%%%%%%%%%%%%%
\begin{algorithm}[htp]
\caption{Adaptive Fictitious Play} \label{alg:atifp}
\begin{algorithmic}[1]
\STATE Given payoff matrix $M_i$, coefficient $\tau_i$, $i=1,2$, initial step size $\eta_0$, minimum step size $\eta_{min}$, and window size $T$.
\FOR{$k \in \left\{0,1,2,\ldots \right\}$}
\STATE Update the estimated frequency of the opponent, $r_{-i}$, using (\ref{tiest1}), (\ref{tiest2}).
\STATE Compute the best response  mixed strategy $\beta_i(r_{-i}(k))$ using (\ref{best_response}).
\STATE Randomly play an action $a_i(k)$ according to the best response mixed strategy $\beta_i(r_{-i}(k))$,
 % $$a_i(k)=rand \left[ \beta_i(r_{-i}(k)) \right], $$
such that the expectation $E \left[a_i(k) \right]=\beta_i(r_{-i}(k))$.
\IF{at the end of a time window, $\mod(k,T)=0$, } 
\STATE Compute the standard deviation of the estimated frequencies ($\rm{stdef}$)
in the time window $[r_{-i}(k-T+1),\ldots,r_{-i}(k)]$ (using an unbiased estimator): 
\begin{eqnarray}
\nonumber \rm{mef}(k) &=& \frac{1}{T}\sum_{h=k-T+1}^k r_{-i}(h)  \\
\rm{stdef}(k) &=&
\nonumber \sqrt{\frac{\sum_{h=k-T+1}^k \left( r_{-i}(h) - \rm{mef}(k) 
   \right)^2} {(T-1)}}
\end{eqnarray}   
\IF{the computed $\rm{stdef}(k)$ has decreased compared to previous time window} 
\STATE Decrease step size: $\eta=0.5\, \eta$ and $\eta=\max (\eta, \eta_{min})$.
\ELSE
\STATE Keep step size $\eta$ constant.
\ENDIF
\ENDIF
\ENDFOR
\end{algorithmic}
\end{algorithm}
%%%%%%%%%%%%%%%%%%%%%%%%%%%%%%%%%%%%%%%%%%%%%%%%%%%%%%%%%%%%%%%%%%%%%%%%%%%%%%%
\section{Simulation results} \label{sec:sim}
We present in this section some simulation results for TIFU-FP and AFP where the payoff matrices and entropy coefficients are chosen to be
 \begin{equation} \nonumber
M_1 = \left(
\begin{array}{cc}
	1	&	5 \\
	3	&	2 \\
\end{array}
\right), \  
M_2 = \left(
\begin{array}{cc}
	4	&	1 \\
	3	&	5 \\
\end{array}
\right), \ 
\tau_1=0.5, \ \tau_2=0.3. 
\end{equation}
%%%%%%%%%%%%%%%%%%%%%%%%%%%%%%
The Nash Equilibrium of the static game is $(0.79, \ 0.21)$ and $(0.47, \ 0.53)$. The local stability threshold (the RHS of Equation (\ref{conv_tifufp})) is $\eta_0 = 0.2536$. For simplicity, in the graphs shown here, we only plot the first component of each frequency vector.
%%%%%%%%%%%%%%%%%%%%%%%%%%%%%%%%%%%%%%%%%%%%%%%%%%%%%%%%%%%%%%%%%%%%%%%%%%%%%%%%%%%%%%%%%%%%%%%%%%%%%%%%%%%%%%%%%%%%%%%%%%%%%%%%%%%%%%%%%%%%%%%%%%%%%%
\subsection{Fictitious Play with Time Invariant Frequency Update} \label{ss:simtifp}
%%%%%%%%%%%%%%%%%%%%%%%%%%%%%%%%%%%%%%%%%%%%%%%%%%%%%%%%%%%%%%%%%%%%%%%%%%%%%%%%%
Some simulation results for the mean dynamic of TIFU-FP (Equations (\ref{est_mean_dynamic})) are given in Figures \ref{fig:tisfp_est_freq_025} and \ref{fig:tisfp_est_freq_026}. When $\eta=0.25<\eta_0 = 0.2536$, the estimated frequencies are shown in Figure \ref{fig:tisfp_est_freq_025}. The simulation results show that both estimated frequencies and empirical frequencies (not presented here due to space limitations) converge to the NE as expected. When $\eta=0.26>\eta_0$, however, the estimated frequencies do not converge anymore. These simulations thus confirm the theoretical result in Theorem \ref{thm:conv}. It is also worth noting that the empirical frequencies in the case $\eta=0.26$ still converge to the NE.
%%%%%%%%%%%%%%%%%%%%%%%%%%%%%%%%%%%%%%%%%%%%%%%%%%%%%%%%%%%%%%%%%%%%%%%%%%%%%%%
\begin{figure}
\centering
  % Requires \usepackage{graphicx}
  \includegraphics[width=8cm]{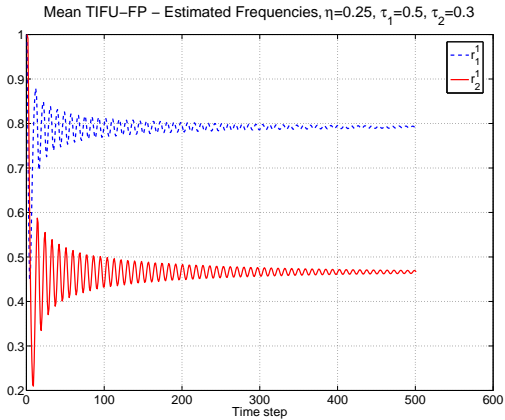} \\
  \caption{Mean dynamic of FP with Time-Invariant Frequency Update - Estimated Frequencies, $\eta=0.25$, $\eta_0 = 0.2536$.} \label{fig:tisfp_est_freq_025} 
\end{figure}
%%%%%%%%%%%%%%%%%%%%%%%%%%%%%%%%%%%%%%%%%%%%%%%%%%%%%%%%%%%%%%%%%%%%%%%%%%%%%%%
%%%%%%%%%%%%%%%%%%%%%%%%%%%%%%%%%%%%%%%%%%%%%%%%%%%%%%%%%%%%%%%%%%%%%%%%%%%%%%%
\begin{figure}
\centering
  % Requires \usepackage{graphicx}
  \includegraphics[width=8cm]{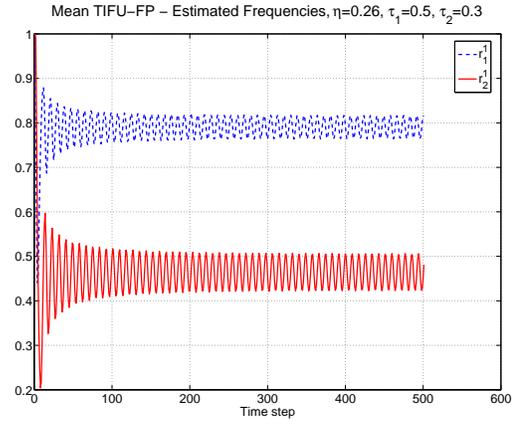} \\
  \caption{Mean dynamic of FP with Time-Invariant Frequency Update - Estimated Frequencies, $\eta=0.26$, $\eta_0 = 0.2536$.} \label{fig:tisfp_est_freq_026} 
\end{figure}
%%%%%%%%%%%%%%%%%%%%%%%%%%%%%%%%%%%%%%%%%%%%%%%%%%%%%%%%%%%%%%%%%%%%%%%%%%%%%%%
Unlike the mean dynamic, a stochastic TIFU-FP process (generated with Algorithm \ref{alg:tifp}) exhibits significant random fluctuations. The graph in Figure \ref{fig:tisfp_s_est_freq_001} shows the estimated frequencies of such a process where we choose $\eta=0.01$. However, the empirical frequencies (whose graph is not shown here due to space limitations) still converge to the NE .
%%%%%%%%%%%%%%%%%%%%%%%%%%%%%%%%%%%%%%%%%%%%%%%%%%%%%%%%%%%%%%%%%%%%%%%%%%%%%%%
\begin{figure}
\centering
  % Requires \usepackage{graphicx}
  % Generated by CCA10\sfp_ti.m 
  \includegraphics[width=8cm]{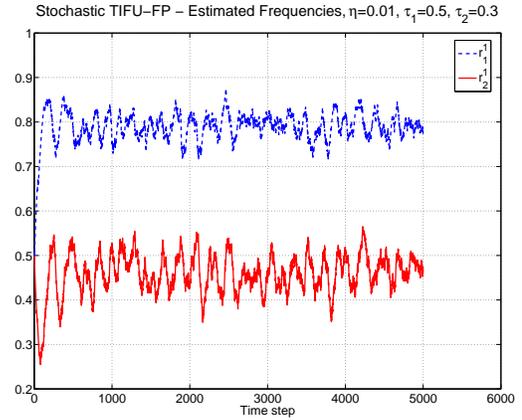} \\
  \caption{Stochastic FP with Time-Invariant Frequency Update - Estimated Frequencies, $\eta=0.01$.} \label{fig:tisfp_s_est_freq_001} 
\end{figure}
%%%%%%%%%%%%%%%%%%%%%%%%%%%%%%%%%%%%%%%%%%%%%%%%%%%%%%%%%%%%%%%%%%%%%%%%%%%%%%%
%%%%%%%%%%%%%%%%%%%%%%%%%%%%%%%%%%%%%%%%%%%%%%%%%%%%%%%%%%%%%%%%%%%%%%%%%%%%%%%
\begin{figure}
\centering
  % Requires \usepackage{graphicx}
  % Generated by icc09\fp1_po.m
  \includegraphics[width=8cm]{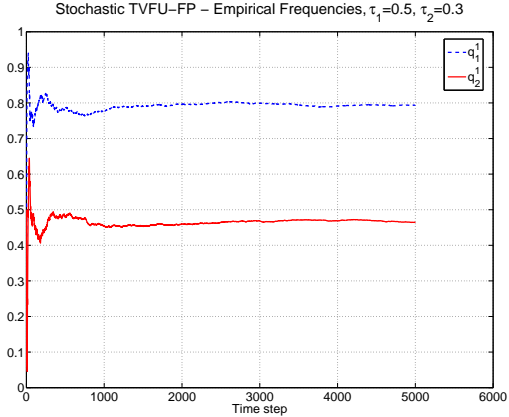} \\
  \caption{Stochastic FP with Time-Varying Frequency Update - Empirical Frequencies.} \label{fig:tvsfp_emp_freq} 
\end{figure}
%%%%%%%%%%%%%%%%%%%%%%%%%%%%%%%%%%%%%%%%%%%%%%%%%%%%%%%%%%%%%%%%%%%%%%%%%%%%%%%
%%%%%%%%%%%%%%%%%%%%%%%%%%%%%%%%%%%%%%%%%%%%%%%%%%%%%%%%%%%%%%%%%%%%%%%%%%%%%%%
\subsection{Adaptive Fictitious Play}
%%%%%%%%%%%%%%%%%%%%%%%%%%%%%%%%%%%%%%%%%%%%%%%%%%%%%%%%%%%%%%%%%%%%%%%%%%%%%%%
Some simulation results for adaptive FP are shown in Figures \ref{fig:atisfp_emp_freq} and \ref{fig:atisfp_step_size}. The payoff matrices and entropy coefficients are the same as those in \ref{ss:simtifp}. Initial and minimum step sizes are chosen to be $\eta_0=0.1$ and $\eta_{min}=0.0005$, respectively. The time window for updating the step size is $T=50$ steps. The evolution of the empirical frequencies are depicted in Figure~\ref{fig:atisfp_emp_freq}, which shows that adaptive
FP converges faster than the stochastic FP with time-varying frequency update (TVFU-FP) (Figure \ref{fig:tvsfp_emp_freq}). We however remark that it is possible to incorporate a decreasing coefficient into the step size in  TVFU-FP (which is originally $1/k$) to make the TVFU-FP process converge faster \cite{MSA05}. The update of the step size in adaptive FP is shown in Figure~\ref{fig:atisfp_step_size}. Note that when compared to the step size $1/k$ in TVFU-FP, the step sizes in adaptive FP are higher in the beginning and smaller afterwards, resulting in aggressive convergence first and less fluctuation in the stable phase. 
%%%%%%%%%%%%%%%%%%%%%%%%%%%%%%%%%%%%%%%%%%%%%%%%%%%%%%%%%%%%%%%%%%%%%%%%%%%%%%%
\begin{figure}
\centering
  % Requires \usepackage{graphicx}
  % Generated by \CCA10\Tansu\bafp1.m
  \includegraphics[width=8cm]{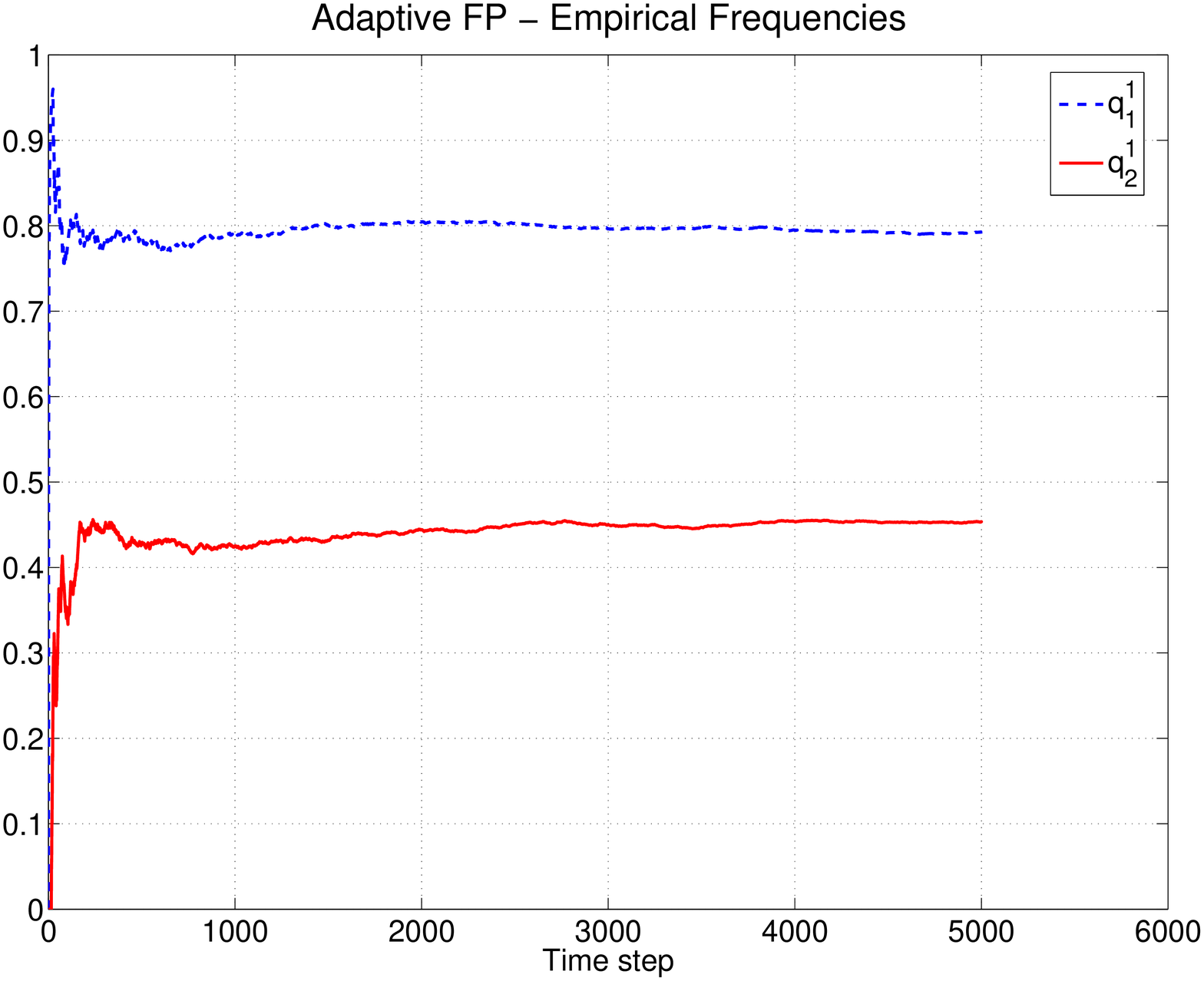} \\
  \caption{Adaptive Stochastic FP - Empirical Frequencies.} \label{fig:atisfp_emp_freq} 
\end{figure}
%%%%%%%%%%%%%%%%%%%%%%%%%%%%%%%%%%%%%%%%%%%%%%%%%%%%%%%%%%%%%%%%%%%%%%%%%%%%%%%
%%%%%%%%%%%%%%%%%%%%%%%%%%%%%%%%%%%%%%%%%%%%%%%%%%%%%%%%%%%%%%%%%%%%%%%%%%%%%%%
\begin{figure}
\centering
  % Requires \usepackage{graphicx}
  \includegraphics[width=8cm]{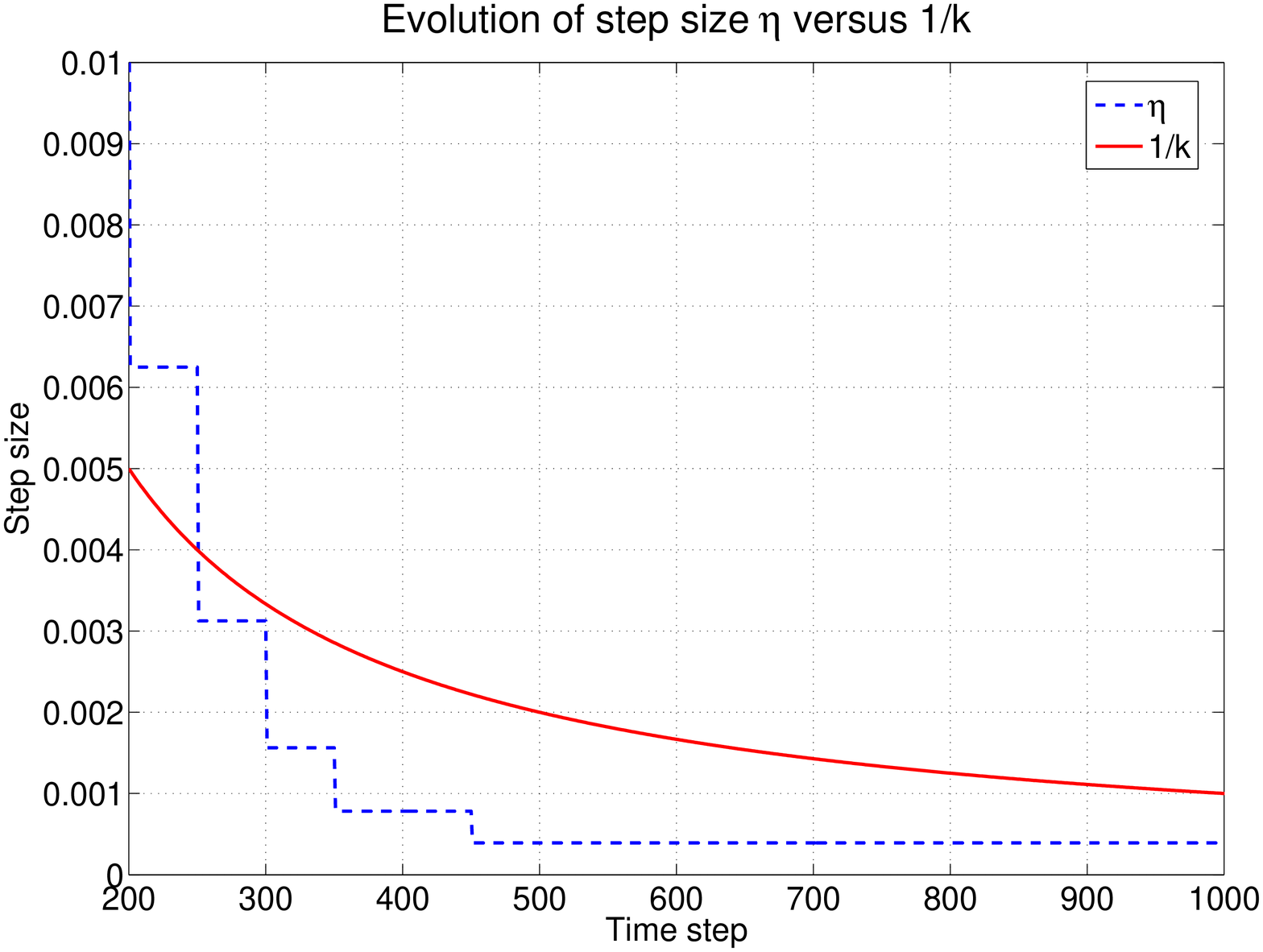} \\
  \caption{Adaptive Stochastic FP - Evolution of step size.} \label{fig:atisfp_step_size} 
\end{figure}
%%%%%%%%%%%%%%%%%%%%%%%%%%%%%%%%%%%%%%%%%%%%%%%%%%%%%%%%%%%%%%%%%%%%%%%%%%%%%%%
\section{Conclusions}
%%%%%%%%%%%%%%%%%%%%%%%%%%%%%%%%%%%%%%%%%%%%%%%%%%%%%%%%%%%%%%%%%%%%%%%%%%%%%%%
In this paper, we have introduced a time-invariant scheme to estimate the frequency of the opponent's actions in a two-player two-action fictitious play process. We have proved local stability of the unique Nash equilibrium for the mean version of this FP dynamic. This frequency update scheme, when used adaptively, allows players to converge faster to the Nash equilibrium. For this two-player two-action FP, conditions for global stability, if they exist, are yet to be found. Also, having more than two possible actions for each player is an intriguing research extension.
%%%%%%%%%%%%%%%%%%%%%%%%%%%%%%%%%%%%%%%%%%%%%%%%%%%%%%%%%%%%%%%%%%%%%%%%%%%%%%%
%
% The following two commands are all you need in the
% initial runs of your .tex file to
% produce the bibliography for the citations in your paper.
\bibliographystyle{abbrv}
%\bibliography{acc10_nab}  % sigproc.bib is the name of the Bibliography in this case
%\bibliography{cca10_nab_final.bbl}  % sigproc.bib is the name of the Bibliography in this case

\end{document}